\theoremstyle{definition}
\theoremstyle{remark}
\theoremstyle{plain}
\newcommand{\algorithmroutine}[1]{-- \textit{#1}}
\begin{document}
%
\title{A novel Quorum Protocol}

\author{Parul Pandey, Maheshwari Tripathi \thanks{Maheshwari Tripathi is working with the Computer Science Dept of IET,India} }

%

\IEEEcompsoctitleabstractindextext{%
\begin{abstract}
 One of the traditional mechanisms used in distributed systems for maintaining the consistency of replicated data is voting.
  A problem involved in voting mechanisms is the size of the Quorums needed on each access to the data. In this paper, we present a novel and efficient distributed algorithm for managing replicated data. We impose a logical wheel structure on the set of copies of an object. The protocol ensures minimum read quorum size of one, by reading one copy of an object while guaranteeing fault-tolerance of write operations.Wheel structure has a wider application area as it can be imposed in a network with any number of nodes.
\end{abstract}

\begin{keywords}
Replica-control, distributed database, quorum consensus.
\end{keywords}}
\maketitle
\bibliographystyle{plain}
\IEEEdisplaynotcompsoctitleabstractindextext
\IEEEpeerreviewmaketitle
\section{Introduction}
%
%

%
%
%
%
\IEEEPARstart{I}{n} a distributed database system, data is replicated \cite{understreplicatn}, \cite{optimistic}, \cite{citeulike:985059} to achieve fault-tolerance. One of the most important advantages of replication is that it masks and tolerates failures in the network gracefully and increases availability.  In particular, the system remains operational and available to the users despite failures.
 In case of multiple access a problem that must be solved while using replication is how to maintain the copies in a consistent state \cite{dangers}. To keep logical data consistent, there must exist a control protocol responsible for synchronizing the access.  A popular method for maintaining consistency of replicated data is weighted voting  \cite{weightedvoting} which is a generalization of the majority consensus method presented in \cite{majority}. In the quorum consensus (QC) \cite{qc}, \cite{scienceorfiction} algorithm, we assign a non-negative weight \cite {assign} to each copy $x_{A}$ of $x$. We then define a read threshold $RT$ and write threshold $WT$ for $x$, such that both $2WT$ and $(RT + WT)$ are greater than the total weight of all copies of $x$. A read (or write) quorum of $x$ is any set of copies of $x$ with a weight of at least $RT$ (or $WT$).  For better performance, some logical structure is imposed on the network, and the quorums are chosen under the consideration of such structures. Such logical structures include the tree  \cite{tree90}, diamond \cite{diamond}, ring \cite{ring}, triangular mesh \cite{mesh}, and grid \cite{grid} structures. A geometric approach for dealing with logical structures is proposed in  \cite{coterie}.

In this paper we propose a novel protocol, which is called The \emph{Wheel Quorum Consensus Protocol} or simply \emph{The Wheel Protocol}, for managing replicated data. In this protocol, the sites in the network are logically organized into a wheel structure. This protocol can be viewed as specialized version of ring and tree protocol. This protocol has an upper hand on both tree and ring protocol, unlike tree and ring protocol it's read quorum size never exceeds one , which is minimum among all.
 As compared to tree, grid, diamond and mesh protocol, wheel protocol is very flexible in arranging nodes in a network into the logical structure. Any number of nodes can be easily organized into a wheel structure.

The paper is organized as follows. In Section 2 we describe the system model. Section 3 discusses wheel quorum protocols which elaborates the motivation behind it, wheel structure and its quorum construction for read and write.
\section{Model}
A distributed system consists of a set of distinct sites that communicate with each other by sending messages over a communication network. No assumptions are made regarding the speed, connectivity, or reliability of the network. It is assumed that sites are fail-stop \cite{failstop} and communication links may fail to deliver messages.

Replication of data is achieved by storing copies of the same logical data item at different nodes. Read and write operations can be performed on replicated data. A node needs to obtain permission from a number of copies (quorum) before performing the operation using a control protocol.

In a replicated database, copies of an object may be stored at several sites in the network. Multiple copies of an object must appear as a single logical object to the transaction. This is termed as one-copy equivalence \cite{goodman87} and is enforced by the replica control protocol. The correctness criteria for replicated databases is one-copy serializability \cite{goodman87}, which ensures one-copy equivalence and serializable execution of transactions. In order to ensure one-copy equivalence, a replicated object \emph{z} may be read by reading a read quorum of copies, and it may be written by writing a write quorum of copies. The following restriction is placed on the choice of quorum assignments:

\textbf{Quorum Intersection Property: }For any two operations o[Z] and $\acute{o}$[z] on an data item  $x$, where at least one of them is a write, the quorums must have a nonempty intersection.

Version numbers or timestamps are used to identify the current copy in a quorum.
Each node is logically characterized by few attributes as shown in figure 1.
\textbf{ID} which is a unique sequential ID. In our discussion, IDs are numbered as 0, 1, 2, 3,... $n$.
\textbf{Node \_ Location} is the location where the node is physically residing. In other words this is the address of a node in the network.
\textbf{HUB} contains the ID of the node in the wheel which is currently acting as hub. In our discussion, ID of the HUB node is 0.
\textbf{SUC} contains the ID of the successor $w_{i+1}$, which is the next node in the wheel.
\textbf{PRED} contains the ID of the predecessor $w_{i-1}$, which is the previous node in the wheel.

\begin{figure}[h]
\begin{center}
\includegraphics[width=2in]{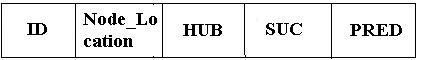}
\label{Wheel node}
\end{center}
\caption{Wheel Structure}
\end{figure}

The election quorum ensures that the HUB's ID is always 0.
\section{Wheel Quorum Protocol}

\subsection{Motivation}
Tradeoff between the cost for reading, writing, data availability and node fault tolerance is the deciding feature of all existing control protocols for replicated data . For example, the read-one write-all scheme needs only one copy as read quorum, but has the convenience of having a write quorum equal to the total number of copies ( thus not tolerating a single node of failure).

The main motivation for our work was to develop a protocol which had a constant minimum cost for reading, while maintaining an acceptable cost for writing, since we are interested in systems where read operations are much more frequent than write operations.

To achieve this property, a logical wheel structure will be imposed on the set of copies of the object . This structure is used by operations to determine the copies that must be read or written. Figure 2, represents 4 nodes arranged in a wheel structure. Wheel logical structure can be arranged on any number of nodes, whereas other logical structures have constraints with nodes arrangement. We note that this structure is logical, and does not have to correspond to the actual physical structure of the network connecting the sites, storing the copies. This wheel structure is used to motivate the protocol.
\subsection{The Wheel Structure}
Let $W_{n}$ = {$w_{0}$, $w_{1}$, $w_{2}$,...,$w_{n-1}$} be the set of nodes that store  copies of a replicated data item. \textbf{A wheel}, $W_{n}$ is a logical structure with \emph{n} nodes, formed by connecting a single node called HUB to all vertices of an \emph{(n-1)} cycle. The numerical notation for wheels is used inconsistently in the literature: some authors instead use \emph{n} to refer to the length of the cycle, so their $W_{n}$ is the graph we would denote as $W_{n+1}$. All nodes in the cycle maintain adjacency relationship by maintaining ID's of their successor and predecessor. Each node is defined by attributes ID, Node\_Location, HUB, Suc, and Pred as shown in figure 1.
Wheel structure is easily imposed on the set of nodes by selecting first node as HUB and adding other nodes as spokes in cycle by defining the successor (Suc(i)) , predecessor (Pred(i))  operations and by setting HUB in each spoke.
Other operations are GetPermission(i) and rand(1..n).

 GetPermisson(i), returns TRUE if the node $w_{i}$ allows access to its own copy of the item. GetPermisson(i) returns FALSE when either node $w_{i}$ refuses access or cannot be contacted due to failure. rand(1..n) selects and returns random number from 1 to n, where n is the number of nodes in wheel. This random number represents ID of selected node.

\begin{figure}[h]
\begin{center}
\includegraphics[width=2.5in]{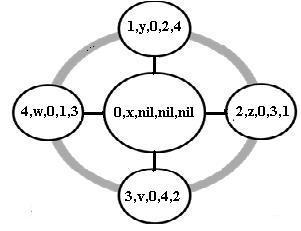}
\label{Wheel Sructure}
\end{center}
\caption{Wheel Structure}
\end{figure}
One of the restrictions imposed by the suggested implementation for collecting read quorums is that the reads are directed to a specific copy: the HUB. This has the advantage that if the HUB is up, read  operations accesses a single copy. Read locality may, however, be sacrificed and the HUB may become a bottleneck. To solve this problem, it is desirable to gather a quorum of several relatively-local copies rather than one very remote HUB copy. This approach could also be used for organizing the wheel structure of the copies. For example, consider a network composed of two relatively distant segments: the HUB could be placed in one of the segment and the other nodes of the wheel in the other segment. In such an organization, transactions executing in a particular segment will use the quorum which is less expensive. If the HUB is in the transaction's network segment, the HUB will be accessed. Otherwise, the transaction will access any two adjacent nodes of the wheel. The functions depicting Read and Write quorum should be appropriately modified to enforce this policy. Whereas, one policy of election quorum is already suggested in this paper to avoid the problem of HUB bottleneck.
\subsection{The Wheel Protocol}
In this protocol, all copies of a replicated data item are organized into a wheel structure. Specific algorithms are used for read and write quorums construction. There is one election algorithm for electing new HUB in case of failure of HUB or in case load threshold exceeds its limit. These algorithms use the adjacency information to guarantee quorum intersection, and to maintain the quorum sizes small. There are three type of quorums, Read, Write, and Election quorum.

\textbf{Read Quorum } is formed by getting access permission from HUB.

\textbf{Write quorum }is obtained by getting access permission from HUB and half of alternating nodes in the cycle, thus requiring the majority of the total number of copies.
As an example, consider a replicated data item with six copies arranged in a wheel structure as shown in figure 3. Eligible read quorum is {0} ( i.e. {HUB}) and sets eligible for write quorum are : \{0,1,3,5\}, \{0,1,2,4\}, \{0,3,5,2\}, \{0,4,1,3\} and \{0,5,2,4\}.

Notice that eligible quorums are coteries, satisfying the minimality and intersection properties\footnote{The fact that the quorums are distinct and have the same size shows that they satisfy the minimality property: the intersection property will be shown later, when providing the protocol correctness}.

\begin{figure}[h]
\begin{center}
\includegraphics[width=1.5in]{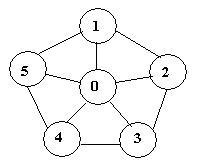}
\label{Wheel with 6 nodes}
\end{center}
\caption{6 copies organized into a logical structure}
\end{figure}

Election quorum is called in two situations
\begin{enumerate}
\item When HUB crosses its load threshold
\item When HUB is unavailable
\end{enumerate}
In both the above cases, the node initiating election quorum algorithm, selects randomly any 2 adjacent nodes, checks their version and makes the latest one the HUB by changing the location address between the old HUB and the newly elected one. This logically swaps the location of the two nodes. Other nodes are unaffected as they identify HUB by its ID, which is 0. Only the node\_location is changed.

Advantages of this Election Quorum are-
\begin{enumerate}
\item HUB is never overloaded, as it gets swapped with a latest node whenever load \_ threshold crosses its limit.
\item Improved load distribution. Assuming that each node in cycle has the equal probability of being selected as a new HUB, no node will be working as HUB for a longer time.
\item Constant minimum possible Read Quorum size of one. As, even if HUB is failed , it will be replaced with a new HUB. Thus ensuring that a request always reads data from HUB.

\end{enumerate}
Without using election quorum, in the failure of HUB, Read Quorum can be achieved by accessing any 2 adjacent nodes in the cycle, which is double the cost of doing it with HUB. Our system has more number of reads as compared to write , so reads will keep on costing double till HUB recovers. All this can be avoided by using election quorum and electing new HUB. This way, present as well as subsequent reads can be satisfied by reading only HUB.

\subsubsection{Quorum Construction}
There are three algorithms for the wheel protocol. Algorithm 1, 2, 3 for read, write and election quorum respectively.

Algorithm 1 defines read quorum construction. This algorithm returns the HUB as the read quorum. In case of a HUB failure, the new HUB is elected by invoking the ElectionQuorum Protocol, which uses a random node in cycle.

\begin{algorithm}
    \caption{Read Quorum(i)}
    \label{Read Quorum}
    \begin{algorithmic}
        \IF {Empty(Wheel)}
        \STATE Return(nil)
        \ELSIF {GetPermission(HUB) is False}
                \STATE r= rand(1 .. n)
                \STATE Get ElectionQuorum(r)
                 \STATE  Return(HUB)
                 \ELSE
                 \STATE Return(HUB)
        \ENDIF
  \end{algorithmic}
\end{algorithm}

Algorithm 2 is to find write quorum. This protocol collects majority of nodes forming quorum between nodes in cycle of wheel in list, Quorum\_list[]. This Quorum\_list[] along with HUB makes write quorum. Protocol tries to form write quorum with current\_node by traversing the cycle until, either a quorum is obtained or all copies have been examined (in which case quorum was not obtained and the request for writing is refused). In case of HUB failure Election Quorum elects a new HUB.

\begin{algorithm}
    \caption{Write Quorum(i)}
    \label{Write Quorum}
     \algorithmroutine{Main routine}
    \begin{algorithmic}[1]
        \STATE nodes\_covererd=0
        \STATE current\_node = i
         \IF {GetPermission(HUB) is False}
          \STATE n= random(cycle nodes)
          \STATE Get ElectionQuorum(n)
          \STATE GetPermission(HUB)
         \ENDIF

         \IF {current\_node is HUB}
            \STATE current\_node = rand(1..n)
         \ENDIF
         \WHILE{{Empty QuorumList[] \AND $nodes\_covered < n$}}
         \STATE Quorum\_list[]= Check(current\_node)
        \STATE current\_node=Suc(current\_node)
        \STATE nodes\_covered++
         \ENDWHILE
    \STATE Return(HUB $\bigcup$ QuorumList[])
\end{algorithmic}
 \algorithmroutine{Check($i$)}
  \begin{algorithmic}[1]
  \STATE Quorum\_list[] = null
  \STATE fail= nodes\_checked=0
  \WHILE{$Fail\neq 1$ \AND $nodes\_checked < \lfloor n/2\rfloor$}
   \IF {GetPermission(i)}
   \STATE Quorum\_list.add(i)
   \STATE i=Suc(Suc(i))
   \STATE nodes\_checked++
   \ELSE
   \STATE Fail=1
   \ENDIF
   \ENDWHILE
  \IF {Fail}
    \STATE Quorum\_list.flushall()
    \STATE return(Quorum\_list[])
  \ELSE
    \STATE return(Quorum\_list[])
  \ENDIF
  \end{algorithmic}
\end{algorithm}

In case of HUB failure, Election Quorum (Algorithm 3) elects a new HUB. This protocol can be called in two conditions. First when the HUB has failed or second whenever HUB exceeds it's load threshold. Election quorum selects two adjacent nodes(using successor function), selects the node with latest value and makes it the HUB.
\begin{algorithm}
    \caption{Election Quorum(i)}
    \label{election Quorum}
    \begin{algorithmic}[1]
    \STATE current\_node=i
    \STATE Quorum=0
    \STATE nodes\_done=0
    \IF {current\_node is HUB}
    \STATE current\_node=rand(1..n-1)
    \ENDIF
    \WHILE{{Quorum is Empty \OR $nodes\_done < n$}}
      \IF {current\_node is accessible}
        \IF {SUC(current\_node) is accessible}
          \STATE Latest\_node=Node\_Location with most recent value
          \STATE Swap Node\_Location of HUB and Latest\_node
          \STATE Quorum=Latest\_node
        \ELSE
          \STATE current\_node=Suc(Suc(current\_node))
          \STATE nodes\_done=nodes\_done + 2
        \ENDIF
      \ELSE  		
        \STATE current\_node=Suc(current\_node)
		\STATE nodes\_done=nodes\_done+1
      \ENDIF
   \ENDWHILE
\end{algorithmic}
\end{algorithm}

\subsubsection{Quorum Size}
An outstanding feature of Wheel quorum is its minimum read quorum size which is always one. This is achieved by reading only HUB. HUB is always made available even if existing one is failed by election quorum.  This is minimum read quorum size achieved by any algorithm.

Write quorum size is \textbf{\emph{}$\lceil(n-1)/2\rceil+1$}, including the HUB.
\subsubsection{Proof of Correctness and Non-equivalence with Vote Assignment}
To show protocol correctness, it must be shown that no two conflicting operations are permitted to occur at the same time. Following theorems prove correctness of our algorithms.

\newtheorem{thm}{Theorem}[subsection]
\begin{thm}
In a wheel of size n, the Wheel Protocol guarantees a non-empty intersection between any read and write quorums.
\end{thm}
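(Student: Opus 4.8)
The plan is to exploit the fact that, by construction, every read quorum and every write quorum produced by the protocol contains the HUB, so the intersection is forced to be non-empty regardless of the cycle structure. Concretely, I would first fix the definitions extracted from Algorithms~1 and~2: a read quorum $R$ is exactly the singleton $\{\mathrm{HUB}\}$ returned by Algorithm~1, while a write quorum has the form $W=\{\mathrm{HUB}\}\cup \mathit{QuorumList}$, where $\mathit{QuorumList}$ is the set of alternating cycle nodes accumulated in Algorithm~2. Since $\mathrm{HUB}\in R$ and $\mathrm{HUB}\in W$, it would then follow immediately that $R\cap W\supseteq\{\mathrm{HUB}\}\neq\emptyset$, which is the claim.

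The real work lies not in this set-theoretic step but in justifying that the symbol $\mathrm{HUB}$ denotes one and the same copy in both quorums. I would argue this in two cases. When the current HUB is available, Algorithm~1 returns it directly and Algorithm~2 places it in the returned set, so the same copy of identifier $0$ appears in both. When the HUB is unavailable, both algorithms first invoke the Election Quorum (Algorithm~3) on a randomly chosen node; here I would appeal to the property established when describing the election quorum, namely that this procedure installs a new HUB by swapping the \textit{Node\_Location} attribute while preserving the fixed identifier $0$. Hence after the election a unique copy again carries the HUB identifier, and that copy is the one returned to the reader and included by the writer.

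The main obstacle I anticipate is precisely this failure case: I must show that the election quorum terminates with exactly one well-defined HUB, and that a read and a conflicting write cannot end up referring to HUBs installed by two different concurrent elections. For this I would lean on the stated invariant that the HUB is identified solely by its fixed ID $0$ and that only \textit{Node\_Location} changes during a swap, so that at any instant at most one copy holds that identifier, with version numbers or timestamps singling out the current copy. I would emphasize that the geometric details of the cycle --- the choice of alternating nodes, the value $\lceil (n-1)/2\rceil$, and the wrap-around behaviour of $\mathrm{Suc}(\mathrm{Suc}(\cdot))$ --- play no role in read--write intersection, since the HUB alone already forces it; those details are reserved for the separate write--write argument. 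Finally I would check the degenerate small cases ($n=1$ and $n=2$, where the cycle is trivial or empty) to confirm that the write quorum still contains the HUB and the claim holds there as well.
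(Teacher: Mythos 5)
Your proof takes essentially the same approach as the paper's: both arguments reduce to the observation that every read quorum and every write quorum contains the HUB, so their intersection is non-empty. Your additional care about the election quorum preserving a unique copy with ID $0$, and the check of degenerate cases, are sensible elaborations of the same argument rather than a different route.
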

\begin{proof}
The proof follows from read and write quorum construction algorithms. The read quorum is formed by only HUB in the wheel and write quorum selects HUB and alternate nodes from (n-1) nodes of cycle. Both of them will definitely contain HUB and thus ensures non-empty intersection between any read and write quorums.
\end{proof}

\begin{thm}
In a wheel of size n, the Wheel Protocol guarantees that there is a non empty intersection between any write quorums.
\end{thm}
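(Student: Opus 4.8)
The plan is to reduce the claim to a single structural fact, exactly as in the preceding theorem on read--write intersection: every write quorum contains the HUB. Since two write quorums need only one copy in common to satisfy the Quorum Intersection Property, exhibiting the HUB as that common copy finishes the argument, and no combinatorial counting over the cycle is strictly required.

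First I would read off from the write-quorum construction (Algorithm 2) that any successfully returned write quorum has the form $\mathrm{HUB}\cup\mathrm{QuorumList}[]$, so the HUB (the copy whose ID is $0$) is always a member. The one case needing attention is a HUB that is initially unavailable: there the main routine first calls the Election Quorum to install a fresh HUB with ID $0$ and only afterwards assembles the quorum, so the returned set still contains the HUB. Consequently, for any two write quorums $W$ and $W'$ we have $0\in W$ and $0\in W'$, hence $0\in W\cap W'$ and $W\cap W'\neq\emptyset$.

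I would then make explicit why the proof must pass through the HUB rather than through the selected cycle nodes. Each write quorum contributes $\lceil (n-1)/2\rceil$ alternating nodes of the $(n-1)$-cycle. When $n-1$ is odd this exceeds half of $n-1$, so any two such selections overlap by a pigeonhole count; this is what one sees in the six-copy example, where any two of $\{1,3,5\},\{1,2,4\},\{3,5,2\},\{4,1,3\},\{5,2,4\}$ share a node. When $n-1$ is even, however, $\lceil (n-1)/2\rceil$ equals exactly half the cycle, and the alternating selection splits the cycle into two complementary classes such as $\{1,3\}$ and $\{2,4\}$ on a $4$-cycle, which are disjoint. Thus the cycle portions alone do not guarantee intersection, and the HUB is genuinely indispensable.

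The main obstacle is therefore not combinatorial but a matter of logical bookkeeping: one must argue that the HUB is a single, well-defined logical copy shared by all write quorums even though the Election Quorum may relocate it. I would resolve this by noting that the HUB is identified throughout by its fixed ID $0$, while an election swaps only the Node\_Location and carries over the most recent version; all write quorums therefore reference the same logical copy, so the intersection at the HUB is a real agreement on data and not an artifact of relabeling. With that point settled, the nonempty-intersection conclusion is immediate.
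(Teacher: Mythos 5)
Your proof is correct and follows essentially the same route as the paper's: both arguments rest on the single fact that every write quorum returned by Algorithm 2 contains the HUB (the copy with ID $0$), so any two write quorums intersect there. Your additional observations --- that the alternating cycle selections alone can be disjoint when $n-1$ is even, and that the HUB remains a single well-defined logical copy under election-induced relocation --- are sound elaborations that the paper leaves implicit, but they do not change the underlying argument.
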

\begin{proof}
 It follows from the fact that write quorum is formed by majority of copies \textbf{\emph{}$\lceil(n-1)/2\rceil+1$} which includes HUB in the wheel. Since, each write quorum must include HUB, it is guaranteed that the intersection between any write quorums is non-empty.
\end{proof}
An interesting property of wheel protocol is that the coterie it generates cannot be generated by any vote assignment in the voting protocol\cite{weightedvoting}.
\begin{thm}
There is no vote assignment equivalent to the wheel protocol.
\end{thm}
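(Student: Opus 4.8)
The plan is to show that the write-quorum coterie produced by the Wheel Protocol violates the combinatorial condition that every vote-assignable coterie must satisfy, namely the \emph{summation} property (sometimes called $2$-asummability). Call the protocol \emph{equivalent to a vote assignment} if there are non-negative weights $w_0,\dots,w_{n-1}$ and a threshold $T$ such that the minimal sets $S$ with $\sum_{i\in S} w_i \ge T$ are exactly the write quorums of the wheel. Write $w(S)=\sum_{i\in S}w_i$ and let $\chi(S)\in\{0,1\}^{n}$ be the incidence vector of $S$. The single observation driving the whole argument is this: if one can exhibit two write quorums $Q_1,Q_2$ and two \emph{non}-quorums $S_1,S_2$ with $\chi(Q_1)+\chi(Q_2)=\chi(S_1)+\chi(S_2)$, then no vote assignment can exist. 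Indeed, summing weights coordinatewise gives $w(Q_1)+w(Q_2)=w(S_1)+w(S_2)$; but $w(Q_1),w(Q_2)\ge T$ forces the left-hand side to be at least $2T$, while $S_1,S_2$ being non-quorums forces $w(S_1),w(S_2)<T$ and hence the right-hand side to be strictly below $2T$, a contradiction. So the entire proof reduces to producing one such quadruple of sets.

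First I would fix the six-copy wheel of Figure 3 as the witnessing instance. Take the quorums $Q_1=\{0,1,3,5\}$ and $Q_2=\{0,2,4,5\}$, both of which appear in the enumerated list of write quorums, and take $S_1=\{0,1,2,5\}$ and $S_2=\{0,3,4,5\}$, neither of which appears in that list. A direct check shows $\chi(Q_1)+\chi(Q_2)=(2,1,1,1,1,2)=\chi(S_1)+\chi(S_2)$, where coordinates are indexed by the nodes $0,\dots,5$. The step that makes this rigorous is verifying that $S_1,S_2$ really are non-quorums in the coterie sense, i.e.\ that each contains \emph{no} write quorum: here I would invoke the minimality/equal-size property already noted in the footnote. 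Every write quorum has the same cardinality $\lceil(n-1)/2\rceil+1$, which for $n=6$ equals $4=|S_1|=|S_2|$; so if $S_i$ contained a write quorum $Q$, then $|Q|=|S_i|$ would force $Q=S_i$, contradicting that $S_i$ is not one of the listed quorums. Hence $S_i$ contains no quorum, and therefore under any vote assignment $w(S_i)<T$ (otherwise $S_i$ would be a winning set and would contain a minimal winning set, i.e.\ a wheel quorum). Plugging $Q_1,Q_2,S_1,S_2$ into the summation contradiction above completes the argument for this instance.

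To lift this to general $n$ I would give the same construction parametrically. Writing the cycle as $1,\dots,m$ with $m=n-1=2t+1$, set $Q_1=\{0\}\cup\{1,3,5,\dots,m\}$ and $Q_2=\{0\}\cup\{2,4,\dots,m-1,m\}$ (the alternating sets ``starting at $1$'' and ``starting at $m$'', both legitimate write quorums sharing only the hub and node $m$), and set $S_1=\{0,m\}\cup\{1,2,\dots,t\}$, $S_2=\{0,m\}\cup\{t+1,\dots,2t\}$. By construction the multiset union of $Q_1,Q_2$ and of $S_1,S_2$ are identical (each covers $1,\dots,m-1$ once and the hub and node $m$ twice), so $\chi(Q_1)+\chi(Q_2)=\chi(S_1)+\chi(S_2)$; and each $S_i$ is a block of consecutive cycle nodes (plus the hub and node $m$), hence carries adjacencies forbidden by the step-two alternating pattern and is not a write quorum, so the equal-size argument again certifies it as a non-quorum. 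The same contradiction then applies.

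The main obstacle I anticipate is not the arithmetic but the boundary behaviour: the clean parametric witness requires $t\ge 2$, i.e.\ $n\ge 6$, and the analogous small construction $Q_1=\{0,1,3\},\,Q_2=\{0,2,4\}$ versus $S_1=\{0,1,2\},\,S_2=\{0,3,4\}$ covers $n=5$, but the degenerate triangle wheel $n=4$ is genuinely vote-assignable (giving the hub weight $2$, every spoke weight $1$, and threshold $4$ reproduces exactly the quorums $\{0,1,2\},\{0,1,3\},\{0,2,3\}$). The honest form of the theorem is therefore that \emph{the Wheel Protocol has no equivalent vote assignment for any wheel with $n\ge 5$}, and the delicate point of the write-up is to state this scope explicitly and to confirm in each parity that the constructed $S_1,S_2$ are indeed absent from the quorum family rather than coincidentally equal to some alternating set.
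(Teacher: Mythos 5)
Your proposal is correct, and its engine is the same one the paper uses; the differences are in rigor and scope. The paper, too, exhibits two eligible write quorums, $\{0,1,3,4\}$ and $\{0,2,4,5\}$, and two non-quorums, $\{0,2,3,4\}$ and $\{0,1,4,5\}$, on the six-copy wheel; these are obtained from each other by swapping nodes $1$ and $2$, so they satisfy exactly your identity $\chi(Q_1)+\chi(Q_2)=\chi(S_1)+\chi(S_2)$, and the paper's derivation of $v_1>v_2$ from its inequalities (1),(3) and of $v_2>v_1$ from (2),(4) is your summation contradiction in pairwise-cancelled form. What you add is genuinely worthwhile. First, you actually prove that the two non-quorum sets contain no quorum (the equal-cardinality/minimality argument), which is what licenses the strict inequalities; the paper merely asserts them. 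Second, you give a parametric witness for general $n$, whereas the paper argues only the $n=6$ instance (and mislabels it, citing Figure 2 --- the four-node wheel --- while assigning votes to six copies). Third, and most significantly, you observe that the theorem as stated fails at $n=4$: hub weight $2$, spoke weight $1$, threshold $4$ generates exactly $\{0,1,2\},\{0,1,3\},\{0,2,3\}$, so the honest statement requires $n\ge 5$; this is a genuine correction the paper misses. The one loose end in your plan is the even-cycle parity ($n$ odd, $n\ge 7$), where the step-two pattern yields only two quorums (hub plus all odd nodes, hub plus all even nodes); the same swap --- trading one odd node for an even node between the two quorums --- manufactures the required non-quorums there, so the idea goes through, but that case should be written out rather than deferred to ``confirm in each parity.''
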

\begin{proof}
By contradiction. Consider the wheel in figure 2.
Let $v_{0}$, $v_{1}$,..., $v_{5}$ be the vote assigned to the six copies and $V_{i}$ be the total number of votes. Consider, the two eligible write quorum sets of copies \{0, 1, 3, 4\} and \{0, 2, 4, 5\}, two other sets that are not eligible quorums\{0, 2, 3, 4\} and \{0, 1 ,4 , 5\}. For a vote assignment to be equivalent to the wheel protocol the following must hold

\begin{align*}
v_{0}+v_{1}+v_{3}+v_{4} > \frac{V_{i}}{2}\hspace{1cm}          (1)\\
v_{0}+v_{2}+v_{4}+v_{5} > \frac{V_{i}}{2}\hspace{1cm}              (2)\\
v_{0}+v_{2}+v_{3}+v_{4} < \frac{V_{i}}{2} \hspace{1cm}              (3)\\
v_{0}+v_{1}+v_{4}+v_{5} < \frac{V_{i}}{2} \hspace{1cm}              (4)
\end{align*}

For (1) , (2) there is a quorum and (3), (4) there is no quorum. Solving (1) and (3) we conclude that  $v_{1}$$>$$v_{2}$.
Solving (2) and (4) we conclude that  $v_{2}$$>$$v_{1}$, which is a contradiction. Therefore, there is no vote assignment (using positive integers) that satisfies both condition, and the theorem follows.

\end{proof}
Different logical structures have been expoited in \cite{wheel} and message overhead analysis of wheel is done in \cite{msgovrhd}
\bibliography{parulpandey}
\end{document}